\newtheorem{proposition}{Proposition}
\DeclareMathOperator{\diag}{diag}
\newcommand{\err}[1]{\varepsilon_t^{#1}\sim\mathcal{N}(0,\sigma_{#1}^2)} 
\author{Krzysztof Rusek\\ AGH University of Science and Technology \\Institute  of Telecommunications \And 
        Mathias Drton\\ Technical University of Munich,\\ Department of Mathematics}
\title{Fine-grained network traffic prediction from coarse data}
\email{krusek@agh.edu.pl}\\
\email{mathias.drton@tum.de}\\
\begin{document}

\footnotetext[1]{This work has been submitted to The Austrian Journal of Statistics and is under review process}


\section{Introduction}

Modern ICT systems produce many different kinds of time series with information about computer network traffic.
Network administrators often 
identify malfunction or Denial of Service Attacks upon visual inspection of the time series.  Numerous open-source and commercial systems have been developed to support them in this task; e.g., \cite{mrtg, munin}.  There has also been much interest in modeling network traffic data in order to make statistical predictions of future traffic. 
Indeed, the recent development of Software Defined Networking (SDN) allows one to use traffic predictions as one of multiple factors in optimization of network operation \cite{JAGLARZ2020106992}.

A system optimizing network operation must be lightweight software with a small resource footprint.  However, the traffic statistics it uses 
can quickly grow in size as the considered time horizon increases.  Due to storage limitations, it is thus common 
practice to keep only the most recent data in full resolution (for at most a few days). Older observations  are averaged over progressively larger time scales.  Such an aggregation scheme is, for instance, used by the industry standard, high-performance data logging software \emph{RRDtool}~\cite{rrdtool}.

For visual inspection by humans, the progressive decrease in resolution for older observations does not pose a significant problem. 
However, nonuniform resolutions present a challenge for automated quantitive predictions as used in the aforementioned SDNs.
In this paper, we propose a general framework to address this challenge using a single Linear Gaussian State Space Model (LG-SSM).
Specifically, this paper gives
\begin{enumerate*}
	\item a tractable LG-SSM that draws on fine- and coarse-grained data to obtain fine-grained predictions,
	\item an implementation in a Bayesian framework that  assesses uncertainty in prediction, and
	\item a numerical evaluation of the resulting system based on  real \emph{mrtg} network traffic with a focus on longer time scales (hours to months). 
\end{enumerate*}  
We note that our approach is not a classical fractal-based multi-scale model of the traffic~\cite{abry02multiscale}.

To the authors' best knowledge, the related domain literature on analysis of network traffic does not propose direct competitors to the methods developed in this paper.  In a broader context, 
work such as 
\cite{Folia:2018aa} addresses aggregation issues but does so with a view towards arbitrary time scales, for which these methods adopt more complicated time discretization and resampling schemes for
continuous-time stochastic processes.  This is in contrast with the applications considered here, where there is a clear finest time scale that is discrete and where aggregation is done in a designed and regular fashion to address storage limitations.
The aggregation schemes we consider also differ from down-sampling schemes in monitoring software as implemented in the MATLAB function \emph{d2d}.  Indeed, down-sampling creates thinned time series whereas we are here considering a system that  retains averages.

The LG-SSM we propose is simple when compared to more involved machine learning methods for traffic prediction~\cite{dcrnn19}.  
However, at least in their current design, these graph neural network methods are limited in three aspects:
\begin{enumerate*}
	\item the methods are limited to a single time scale and do not address our question about benefits of aggregated data,
	\item they require order of magnitude larger dataset than offered by the \emph{mrtg} tool,
	\item uncertainty in neural networks is hard to estimate and it is a contemporary research topic~\cite{osband2021epistemic}. 
\end{enumerate*} 
%
Moreover, in practice, a simple Fourier decomposition~\cite{rzym20lasso} 
of the traffic data can help to obtain good forecast accuracy and performance from simple resource-aware models.
A more advanced statistical forecast models are commonly based on ARIMA e.g. \cite{Moayedi2008Arima,Papagiannaki2003longterm} or ARMA~\cite{sang2000}.
Since both the aforementioned models have state-space representation it is possible to extend our proposal to that kind of prediction model.
Having said that, in this work, we focus on structured time series as the decomposition improves the explainability of the prediction. 
The main contribution of this paper is not a new forecasting method but rather a method for improved estimation of the classical models from limited data.

The paper is structured as follows. Section~\ref{sec:methods} outlines the data collection process in the considered network monitoring system, and it develops our joint model for fine and coarse data. Section~\ref{sec:results} presents the setup and the results of our numerical evaluation on traffic in a university network.
Section~\ref{sec:conclusions} summarizes the work, highlights some of its limitations, and suggests simple changes in monitoring systems that would help overcome these limitations.

\section{Methods}\label{sec:methods}

The Multi Router Traffic Grapher (MRTG~\cite{mrtg}) is an  open-source software for monitoring network link load.
It gathers statistics by pooling SNMP counters from network devices.
This process is continuous and while new measurements arrive the old ones are aggregated. 
The default sampling interval is 5 min. 
There are only 600 such measurements, which cover about 2 days. 
Older results are averaged in groups of six, which corresponds to 30 minute intervals. 
There are again 600 such measurements, which cover about 12 days. 
Measurements older than two weeks (2 days + 12 days) are aggregated in 2h windows. 
There are again 600 of them covering 50 days. The remaining samples are averaged over 24 hours and give a coarse-grained traffic description over 2 years.

\subsection{Linear Gaussian state space model}

In order to model the described time series
we adopt the framework of Linear Gaussian State Space Models (LG-SSM).  In general, an LG-SSM may be described through a
dynamical system~\cite{murphy2012machine}:
\begin{align}\label{eq:state_dynamics}
	\bm z_{t} &= \bm F\bm z_{t-1}+\varepsilon_t, \quad \bm\varepsilon_t\sim \mathcal{N}(\bm b,\bm Q),\\
	\label{eq:state_dynamics2}
	x_t&=\bm H\bm z_t+\bm\delta_t, \quad \bm\delta_t \sim \mathcal{N}(\bm c,\bm R),
\end{align}
where $\bm z_0\sim \mathcal{N}(\bm b_0,\bm Q_0)$ is the $n$-dimensional initial hidden state and $\bm x_t\in\mathbb{R}^m$, $t\ge 0$, is the observed signal. Throughout, $\mathcal{N}(\cdot,\cdot)$ denotes a multivariate Gaussian distribution and the noise vectors $(\bm\varepsilon_t,\bm\delta_t)$ are independent across time $t$.
The state dynamics are parameterized by the  transition matrix $\bm F\in\mathbb{R}^{n\times n}$, the transition noise mean 
vector $\bm b\in\mathbb{R}^n $ and covariance matrix $\bm Q\in\mathbb{R}^{n\times n}$.
The observations $x_t$ are noisy linear projections of the states $z_t$ and further parameterized by the observation matrix $\bm H\in\mathbb{R}^{m\times n}$ and the observation noise mean vector $\bm c\in\mathbb{R}^m$ and covariance matrix ${\bm R}\in\mathbb{R}^{m\times m}$.
The system noise $\bm\varepsilon_t$, observation noise $\bm \delta_t$, and the initial state $\bm z_0$ are independent.
Under these assumptions, the Kalman filter can be used for efficient state estimation from noisy observations.
This makes LG-SSM a popular choice for time series forecasting~\cite{harvey1990forecasting}.

For the particular application of interest, we follow~\cite{HU2016forecast} and specify an LG-SSM that incorporates three main components: trend, seasonal effect, and temporally dependent noise.  The trend is modeled by introducing two real hidden states, namely, a current level $l_t$ and a slope $v_t$.  These are real-valued and evolve as
\begin{align}
	\begin{bmatrix}
		l_t\\v_t
	\end{bmatrix}&=\begin{bmatrix}
		1 & 1\\0 & 1
	\end{bmatrix}\begin{bmatrix}
		l_{t-1}\\v_{t-1}
	\end{bmatrix}+\begin{bmatrix}
		\varepsilon_t^{l}\\\varepsilon_t^{v}
	\end{bmatrix}, \label{eq:locallineartrend}
	\end{align}
	where $\err{l}$ and $\err{v}$ are independent.  This model allows for trend changes by a random normal perturbation.  Next, we specify a harmonic signal $f_{t,\omega}$ of frequency $\omega$, which can also be represented as an LG-SSM with known transition matrix with the help of an auxiliary variable $f_{t,\omega}^{*}$.  Specifically,
	\begin{align}
	    	\begin{bmatrix}
		f_{t,\omega}\\f_{t,\omega}^{*}
	\end{bmatrix}&=\begin{bmatrix}
		\cos(\omega) & \sin(\omega)\\-\sin(\omega) & \cos(\omega)
	\end{bmatrix}\begin{bmatrix}
		f_{t-1,\omega}\\f_{t-1,\omega}^{*}
	\end{bmatrix}+\begin{bmatrix}
		\varepsilon_t^{f_\omega}\\\varepsilon_t^{f_{\omega}^*}
	\end{bmatrix}.\label{eq:sezon} 
	\end{align}
	Both components form a hidden state disturbed by independent noise representing phase fluctuations, $\varepsilon_t^{f_{i}^*},\varepsilon_t^{f_{i}}\sim \mathcal{N}(0, \sigma_{f_{\omega}}^2)$. 
Typically the seasonal effect is not harmonic and it is approximated by a sum of multiple Fourier components for both weekly and daily seasonal effects 
(see~\cite{rzym20lasso} for details on how to choose the number of components). 
Finally, further temporal dependences are captured via a one-dimensional autoregressive (AR) process
\begin{align}
		d_t
	&=
		\alpha 
		d_{t-1}
	+
		\varepsilon_t^{d}
	,\label{eq:ar} 
\end{align}
where $\err{d}$.  Summing the three components to one LG-SSM yields the \emph{structural time series} model
\begin{equation}
x_t = l_t+\sum_\omega f_{t,\omega}+d_t+\varepsilon_t^{x}\label{eq:struct},
\end{equation}
where $\err{x}$ is a total error~\cite{harvey1990forecasting}.  In terms of the general representation from \eqref{eq:state_dynamics} we obtain a block-diagonal transition matrix $\bm F$, in which $\alpha$ 
is the only parameter.  The variational inference method applied later ensures that $-1<\alpha<1$ as required for stationarity in \eqref{eq:ar}. 
The observation matrix $\bm H=(1,0,1,0\dots,1)$, with the zeros for states $v_t$ and $f^*_{t,\omega}$, performs the summation in \eqref{eq:struct}. Note that in \eqref{eq:state_dynamics}, $\bm b=0$ and $\bm Q$ is diagonal.  In \eqref{eq:state_dynamics2}, $\bm c,\bm R$ are one-dimensional with $\bm c=0$.

\subsection{Bayesian inference}

In order to fit LG-SSM to the network traffic data of interest, we adopt a Bayesian approach~\cite{murphy2012machine} that captures uncertainty via a posterior distribution.   For our setting this approach has the advantage that today's posterior may play the role of tomorrow's prior distribution.  
The coarse-grained data may thus also serve for the purpose of refining prior distributions 
during the provisioning of the prediction service.

For Bayesian inference a joint prior distribution is to be specified for the initial states and the unknown model parameters.  We assume all quantities to be independent a priori, with marginal normal distributions for the initial states, log-normal distributions ($\mathcal{LN}$) for the variance parameters, and a truncated normal distribution for the AR parameter $\alpha$.  The hyperparameters for these distributions are set  using heuristics in TensorFlow Probability (TFP)~\cite{dillon2017tensorflow} that  aim to provide weakly informative prior distributions.
In order to approximate the posterior distribution 
we apply Stochastic Variational Inference~\cite{murphy2012machine}
Given the parameters, the LG-SSM's predictions (forecasts) are normally distributed.
However, in the Bayesian setup the normality is lost (in general the parameters are not not normally distributed) and the TFP library approximates forecast distributions by mixtures of Normal distributions obtained for a fixed number of posterior samples.

\subsection{Aggregation to coarser time scales}

Let $\bm x_t$ be a time series on a fine time scale. We are interested in leveraging information provided by aggregated versions of $\bm x_t$.  Define the  $r$-aggregated time series $\bm x'_t$ as the average of $r$ consecutive values in non-overlapping windows of length $r$, that is,
\begin{equation}
	\bm x'_t=\frac{1}{r}\sum_{i=0}^{r-1}\bm x_{rt+i}, \quad t=0,1,2,\dots.
\end{equation}
Hereafter we use a ${}^\prime$ to denote aggregated time series and their parameters and properties.
Since $r$-aggregation is a linear operation, it retains Gaussianity of $\bm x_t$ for $\bm x'_t$.  In fact, we also retain an LG-SSM as detailed in the following proposition, where we write $\bigoplus_{i=1}^r \bm A_i$ to denote a direct sum producing the block-diagonal matrix $\diag(\bm A_1,\ldots,\bm A_r)$.

\begin{proposition}\label{th:ssm}
	Let $\bm x_t$ follow the LG-SSM 
	from \eqref{eq:state_dynamics}-\eqref{eq:state_dynamics2}.  Then the $r$-aggregated process $\bm x'_t$ follows an LG-SSM
		with $rn$ states given by
		\[
		\bm z'_t=[\bm z_{rt}^T,\bm\varepsilon_{rt+1}^T,\dots,\bm\varepsilon_{rt+r-1}^T]^T,
		\]
		where $[\cdot]^T$ denotes matrix transposition.
        The transition and observation matrices are
\begin{align}
	\bm F' &= 
 	\begin{bmatrix}
 	\bm F^r & \bm F^{r-1} &\ldots & \bm F\\
 	\bm 0 &\bm 0 & \ldots & \bm 0
 	\end{bmatrix}
	\in\mathbb{R}^{r n\times r n}
	\label{eq:th:F}, \text{and}\\
	\bm H'&=\frac{1}{r}\begin{bmatrix}\sum_{i=0}^{r-1}\bm H \bm F^i,\sum_{i=0}^{r-2}\bm H \bm F^i,\dots,\bm H\end{bmatrix}
	\in \mathbb{R}^{m\times rn}\label{eq:th:H}.
\end{align}
The initial state $\bm z'_{0}$ holds the original initial state and $r-1$ transition noises.
The new system noise $\bm\varepsilon'_{t}=[\varepsilon_{rt}^T,\dots,\varepsilon_{rt+r-1}^T]^T$ is constructed by stacking $r$ system noise vectors, while the new observation noise $\bm\delta'_t$ is an average of the corresponding $r$ observation noise vectors.  They are distributed as
\begin{align}
	\bm\varepsilon'_{t}\sim\mathcal{N}\left([\bm b^T,\ldots,\bm b^T]^T,\bigoplus_{i=1}^{r} \bm Q\right), \quad 
	\bm\delta'_t \sim \mathcal{N}\left(\bm c,\frac{1}{r}\bm R\right).
	\end{align}
\end{proposition}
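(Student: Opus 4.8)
The plan is to verify directly that the augmented process $\bm z'_t$ together with $\bm x'_t$ solves the state-space equations \eqref{eq:state_dynamics}--\eqref{eq:state_dynamics2} with exactly the matrices and noise laws asserted in the statement. Since every map involved is linear and all primitive noises and the initial state are Gaussian, Gaussianity is automatic; only the bookkeeping of the coefficients and the independence structure need checking. The single identity that does the work is the unrolled state recursion obtained by iterating \eqref{eq:state_dynamics},
\[
\bm z_{rt+i}=\bm F^{i}\bm z_{rt}+\sum_{j=1}^{i}\bm F^{i-j}\bm\varepsilon_{rt+j},\qquad i=0,1,\dots,r-1,
\]
in which every noise vector that appears carries an index in $\{rt+1,\dots,rt+r-1\}$ and is therefore one of the blocks stored in $\bm z'_t$.

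For the state equation I would apply this identity with $i=r$ after the shift $t\mapsto t-1$, which writes $\bm z_{rt}$ as $\bm F^{r}\bm z_{r(t-1)}+\bm F^{r-1}\bm\varepsilon_{r(t-1)+1}+\dots+\bm F\,\bm\varepsilon_{rt-1}+\bm\varepsilon_{rt}$. Reading off the coefficients of the blocks $\bm z_{r(t-1)},\bm\varepsilon_{r(t-1)+1},\dots,\bm\varepsilon_{rt-1}$ of $\bm z'_{t-1}$ reproduces exactly the first block row $[\bm F^{r},\bm F^{r-1},\dots,\bm F]$ of $\bm F'$, with $\bm\varepsilon_{rt}$ supplying the first block of the new noise. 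The remaining blocks of $\bm z'_t$ are $\bm\varepsilon_{rt+1},\dots,\bm\varepsilon_{rt+r-1}$, which are copied verbatim and depend on nothing in $\bm z'_{t-1}$; this accounts for the zero lower block rows of $\bm F'$. Hence $\bm z'_t=\bm F'\bm z'_{t-1}+\bm\varepsilon'_t$.

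For the observation equation I would sum $\bm x_{rt+i}=\bm H\bm z_{rt+i}+\bm\delta_{rt+i}$ over $i=0,\dots,r-1$, substitute the unrolled recursion for $\bm z_{rt+i}$, interchange the order of the double sum $\sum_{i=0}^{r-1}\sum_{j=1}^{i}=\sum_{j=1}^{r-1}\sum_{i=j}^{r-1}$, and re-index by $k=i-j$. This collects the coefficient of $\bm z_{rt}$ as $\sum_{i=0}^{r-1}\bm H\bm F^{i}$ and the coefficient of each $\bm\varepsilon_{rt+j}$ as $\sum_{k=0}^{r-1-j}\bm H\bm F^{k}$; dividing by $r$ gives precisely $\bm H'\bm z'_t$, and $\tfrac1r\sum_{i=0}^{r-1}\bm\delta_{rt+i}$ is the new observation noise $\bm\delta'_t$. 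Then I would read off the laws: $\bm\varepsilon'_t$ is a stack of $r$ independent $\mathcal{N}(\bm b,\bm Q)$ vectors, hence $\mathcal{N}([\bm b^{T},\dots,\bm b^{T}]^{T},\bigoplus_{i=1}^{r}\bm Q)$; $\bm\delta'_t$ is an average of $r$ independent $\mathcal{N}(\bm c,\bm R)$ vectors, hence $\mathcal{N}(\bm c,\tfrac1r\bm R)$; and $\bm z'_0=[\bm z_0^{T},\bm\varepsilon_1^{T},\dots,\bm\varepsilon_{r-1}^{T}]^{T}$ is Gaussian with block-diagonal covariance $\bm Q_0\oplus\bigoplus_{i=1}^{r-1}\bm Q$.

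The part that needs care is not the linear algebra but confirming that the construction really is an LG-SSM in the sense of \eqref{eq:state_dynamics}--\eqref{eq:state_dynamics2}, i.e.\ that the independence hypotheses survive aggregation: the index windows $\{rt,\dots,rt+r-1\}$ feeding $\bm\varepsilon'_t$ and $\bm\delta'_t$ are pairwise disjoint over $t$, so the new noises are independent across time; and $\bm z'_{t-1}$ is a function of $\bm z_0$ and $\{\bm\varepsilon_s:s\le rt-1\}$ only, hence independent of $\bm\varepsilon'_t$, with the analogous statement for $\bm\delta'_t$ and the initial state. The accompanying index bookkeeping --- the double-sum interchange that produces the staircase of partial sums, with $\tfrac1r\sum_{i=0}^{r-\ell}\bm H\bm F^{i}$ appearing as the $\ell$-th block of $\bm H'$ --- is the only place where a range or indexing slip could creep in.
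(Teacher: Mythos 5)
Your proposal is correct and rests on the same idea as the paper's proof, namely unrolling the linear dynamics \eqref{eq:state_dynamics} over a window of length $r$ and reading off the block coefficients of $\bm z_{rt}$ and the intermediate noises to identify $\bm F'$, $\bm H'$, $\bm\varepsilon'_t$ and $\bm\delta'_t$. If anything, yours is slightly more complete: the paper writes out only the $r=2$ case in a stacked two-observation matrix identity and calls larger $r$ analogous, while you verify the state and observation equations for general $r$ and explicitly check that the disjoint index windows preserve the independence-across-time and noise/initial-state independence required for the aggregated process to be a bona fide LG-SSM.
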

\begin{proof} 
For clarity, the proof will be given for $r=2$; cases with larger $r$ are analogous.  We have for $t\ge 1$,
\begin{equation}
\begin{bmatrix}\label{eq:matagre}
\bm x'_{t-1}\\
\bm x'_{t}
\end{bmatrix}= \frac{1}{2}\begin{bmatrix}
\bm I & \bm I& \bm 0& \bm 0\\
\bm 0 & \bm 0& \bm I & \bm I
\end{bmatrix}\cdot \begin{bmatrix}\bm x_{2(t-1)}\\\bm x_{2(t-1)+1}\\\bm x_{2t}\\\bm x_{2t+1}\end{bmatrix},
\end{equation}
where $\bm I,\bm 0 \in \mathbb{R}^{m\times m}$ are the identity matrix and the matrix with all entries  $0$, respectively.
Unrolling the dynamics from~\eqref{eq:state_dynamics} we obtain that
\begin{equation*}
\begin{bmatrix}\bm x_{2t-2}\\\bm x_{2t-1}\\\bm x_{2t}\\\bm x_{2t+1}\end{bmatrix}=
\begin{bmatrix} \bm H & \bm 0 &\bm 0 & \bm 0 \\
				\bm H\bm F & \bm H & \bm 0 & \bm 0 \\
				\bm H\bm F^2&\bm H\bm F & \bm H & \bm 0 \\
				\bm H\bm F^3&\bm H\bm F^2&\bm H\bm F & \bm H & \\
\end{bmatrix} \begin{bmatrix}\bm z_{2t-2}\\\bm\varepsilon_{2t-1}\\\bm\varepsilon_{2t}\\\bm\varepsilon_{2t+1} \end{bmatrix}+
\begin{bmatrix}\bm\delta_{2t-2}\\\bm\delta_{2t-1}\\\bm\delta_{2t}\\\bm\delta_{2t+1} \end{bmatrix}.
\end{equation*}
Hence,
\begin{align}
\begin{bmatrix}
\bm x'_{t-1}\\
\bm x'_{t}
\end{bmatrix}=&\frac{1}{2}\begin{bmatrix}
\bm H +\bm H\bm F & \bm H& \bm 0 & \bm 0\\\nonumber
\bm H\bm F^2+ \bm H\bm F^3& \bm H\bm F+\bm H\bm F^2 & \bm H +\bm H\bm F & \bm H
\end{bmatrix} \times
\begin{bmatrix}\bm z_{2t-2}\\\bm\varepsilon_{2t-1}\\\bm\varepsilon_{2t}\\\bm\varepsilon_{2t+1} \end{bmatrix}\\&+
\frac{1}{2}\begin{bmatrix}\bm\delta_{2t-2}+\bm\delta_{2t-1}\\\bm\delta_{2t}+\bm\delta_{2t+1} \end{bmatrix}, 
\end{align}
which simplifies to
\begin{equation}\label{eq:unroledssprime}
\begin{bmatrix}
\bm x'_{t-1}\\
\bm x'_{t}
\end{bmatrix}=\begin{bmatrix} \bm H' & \bm 0 \\ \bm H'\bm F' & \bm H'\end{bmatrix}
\begin{bmatrix} \bm z'_{t-1} \\ \bm\varepsilon'_{t}\end{bmatrix} +
\begin{bmatrix} \bm\delta'_{t-1} \\ \bm\delta'_{t} \end{bmatrix}.
\end{equation}
Equation~\eqref{eq:unroledssprime} defines a new state space model as
\belowdisplayskip=-12pt
\begin{align*}
	\bm z'_{t}&=\bm F'\bm z'_{t-1}+\bm \varepsilon'_t,\\
		\bm x'_t& = \bm H'\bm z'_t+\bm\delta'_t.
\end{align*}\qedhere
\end{proof}

\section{Numerical experiment}
\label{sec:results}

Proposition~\ref{th:ssm} links observations at a different aggregation level ($\bm x'_t$) to the single underlying LG-SSM.
This link 
allows the construction of a joint model for two or more time scales that, as we demonstrate now, yields improved network traffic predictions by leveraging aggregated historical information.
In our Bayesian approach to statistical inference we work with simple vague priors that assume prior independence of parameters and are set using heuristics TensorFlow Probability (TFP)~\cite{dillon2017tensorflow}, as described in Section~\ref{sec:methods}.  For our experiments the following priors were obtained. At the finest time scale, the initial states $l_0$ and $d_0$ are normal with mean $-219$ and variance $5\cdot 10^5$.  The states $v_0$ and $f_0$ are centered normal with variances $10^5$ and $3\cdot 10^5$, respectively.  The log-normal variances $\sigma_l,\sigma_v,\sigma_d,\sigma_{f_\omega}, \sigma_x$ are all derived from a normal distribution with variance 3.  The means are 2.8 for the first three variances, and 1.2 for the rest.
The AR parameter $\alpha$ is truncated standard normal, with truncation to $(-1,1)$.  These same priors are also used to induce priors for aggregated coarse-grained time series; recall Proposition~\ref{th:ssm}, where the state $z'_t$ combines a fine-scale initial state with system noise, the joint distribution of which is determined by the fine-scale priors and dynamics.

Our numerical experiment uses a real MRTG dataset from 
downlink traffic measurements at the first author's university.
In the experiment, we use 600 samples taken every 30 min (fine-grained) and 600 samples taken every 2h (coarse-grained), so in this case $r=4$.
While the model can use all four aggregation levels in the data, we focus here on only two of them  because the practical solutions proposed in the literature are focused on a time scale of about 1h~\cite{JAGLARZ2020106992} and because we could substantially simplify our implementation for $r$ being a power of two.
The overall time span then covers around 62 days, from which the last 2 days are used only for forecast validation.
The data exhibits a strong seasonal component  including both weekly and daily patterns (see Figure~\ref{fig:ts}), which our model picks by inclusion of two periodic components (16 harmonics each).
\begin{figure*}[t] 
	\centering
	\includegraphics[width=\textwidth]{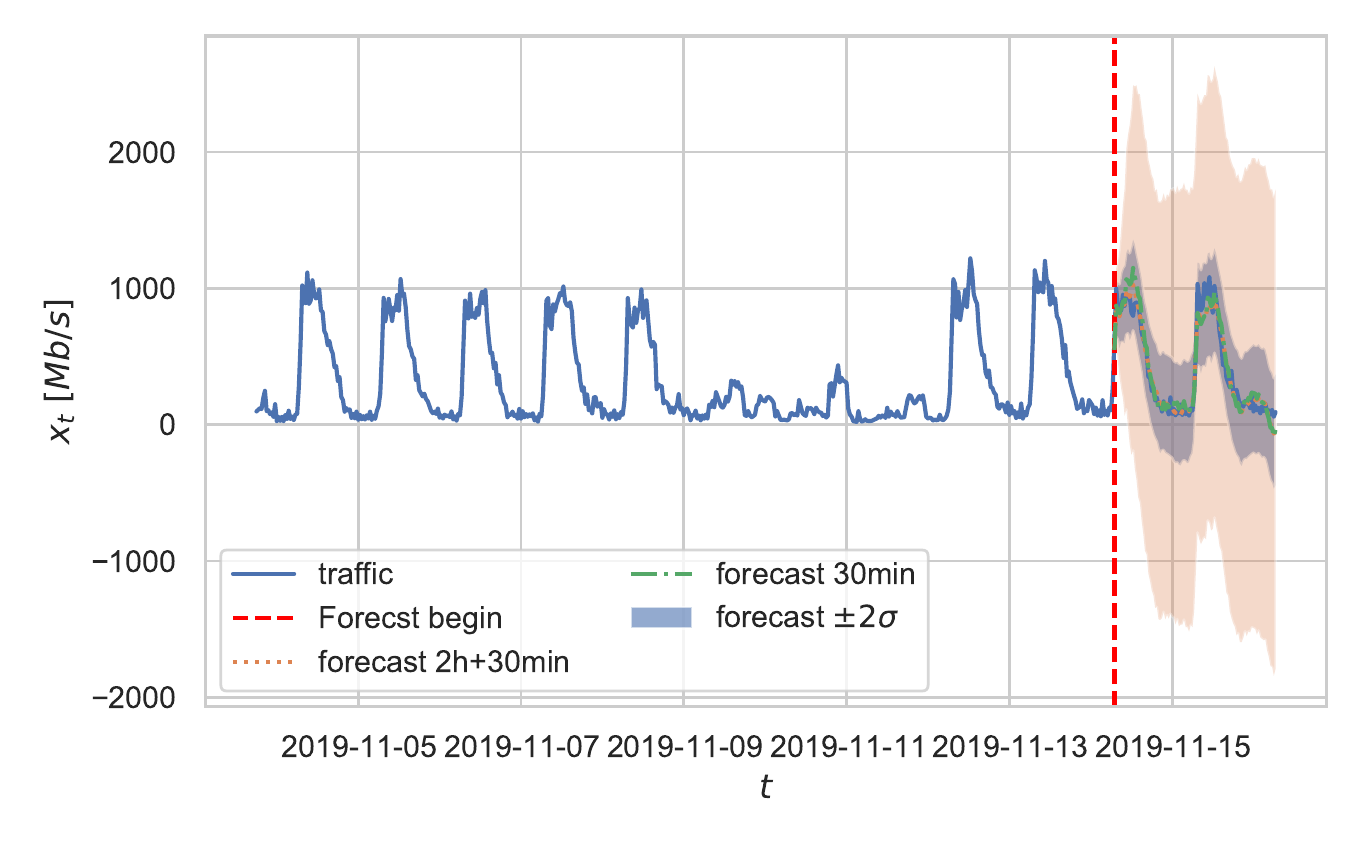}
	\caption{Fine-grained traffic time series and prediction.  The orange ribbon represents uncertainty for the fine-grained model only. The blue ribbon is obtained from our joint model.  Additional observations improve the likelihood of forecast by a factor of $e^{2.32}\approx 10$ (log-likelihood changed from 543.6 to 545.92) and expected $\textrm{mae}$ over \textbf{3.7$\times$}. Holiday: 11, weekend: 9 10 November.  }
	\label{fig:ts}
\end{figure*}
We note that since the fine-grained part of the series contains only one weekend and a holiday on Monday, it is
difficult to estimate the weekly seasonal effect as well as long-range trend, as represented by~\eqref{eq:locallineartrend}, using only observations from one time scale.
The forecast from the fine-grained model based on~\eqref{eq:struct} is presented in Figure~\ref{fig:ts} and denoted as \emph{forecast 30min}.
The forecast distribution is a mixture of 50 equally probable Kalman prediction distributions obtained for 50 samples from the surrogate posterior in the variational approach.

Although the average is reasonably accurate, the forecast uncertainty is very large.
This can be explained by the fact that the extra 
Monday breaks the periodic patterns and increases variability in the trend component.
Addition of additional observations at the coarser 2h resolution substantially improves the prediction.
The prediction interval decreases by more than a factor or three, and the likelihood of the forecast data increases about ten times.

Further intuition about the results can be provided by computing the expected mean absolute error, where the expectation is computed for estimated forecast distribution. 
In other words, this is the average error for all possible future values and thus it depends on the entire forecast distribution.
The expectation computed with the Monte Carlo method using 100 samples (possible forecasts) yield the following values: $\mathbf{\mathrm{mae}_{30min}=602\pm 36}$ Mb/s and $\mathbf{\mathrm{mae}_{30min+ 2h}=160\pm 9}$ Mb/s.
The use of historical data decreased $\mathrm{mae}$ over \textbf{3.7$\times$}.
We emphasize that this substantial improvement in model accuracy comes at a moderate computational cost. 

As previously mentioned, the estimation and forecast were implemented using TensorFlow Probability~\cite{dillon2017tensorflow}.
The implementation with additional numerical examples will be published as open-source software.
The \emph{LinearOperator} API provided by TensorFlow allows one to implement the model efficiently without instantiating the large block matrices from~\eqref{eq:th:F} and~\eqref{eq:th:H}.
Only small dense matrix multiplications are executed, which can be done efficiently via the associative scan procedure. 
Furthermore, our code was
JIT-compiled using XLA -- a TensorFlow linear algebra compiler and executed on GPU.
We observed increased numerical precision of compiled (and optimized) computations compared to step-by-step execution.
The computational cost of 300 variational steps of the estimation is about 2.5h on Tesla V100 SXM2 GPU.
Most of the time is used for compilation (we observed a similar run time without JIT).
However,  as described earlier this cost is amortized by the fact that once we have determined the parameter posterior we can refine the model online without lengthy computations.

\section{Discussion and Conclusions}\label{sec:conclusions}

In this paper we proposed a Bayesian structural time series model for network traffic measurements that feature different time scales. Our LG-SSM approach readily allows one to aggregate fine- versus coarse-grained observations. Our experiment shows that including older coarse-grained statistics on network traffic can drastically improve prediction accuracy and uncertainty.  We are confident this observation would similarly hold in other modeling context, including for example single source of error models (aka exponential smoothing)~\cite{forecast}.

The data we consider were collected using the MRTG tool and come in the form of nonnegative measurements.  The fact that we model these directly as Gaussian and with additive structure is a clear limitation of our work.  Indeed, log-normal distributions have been shown to give more accurate descriptions of network traffic~\cite{lognormtraffic}.  When working with a single time scale this issue can be addressed quite straightforwardly by applying our model to the log-transformed traffic data.
With this transformation additive effects in the model would correspond to multiplicative effects on the original nonnegative scale, and thus more appropriately capture the fact that during the periods of highest traffic intensity (middle of the day) one observes higher fluctuations compared to nights when the traffic is smallest.  
Classical alternative to this transformation is GARCH model~\cite{garch}.
However, with more than one time scale and aggregation of data the use of log-transformations or GARCH is more subtle because time aggregation by averaging is no longer on the log-scale.  As a resolution of this problem we propose that software systems store geometric instead of arithmic means when aggregating older measurements of traffic size.  This simple modification would make our model (and other possible LG-SSM) directly applicable to log-transformed data with multiple time scales.

We tested this proposal on the real traffic, see detail in the Appendix~\ref{sec:geo}.
The experimental results support our claims for both log-transformed and non-scaled data.
The reduction of prediction variance significantly improves predictions in the log-transformed domain. 
The application of log transform also reduces forecast error by eliminating unrealistic negative forecasts. We observed a 4\% reduction of mae, solely due to forecast in the log domain. This is a clear indication that when it comes to network traffic, the geometric average is more informative compared to arithmetic aggregation.

\section{Acknowledgments}
This work was supported by the National Science Centre, Poland under grant nr 2019/03/X/ST7/00386, Polish Ministry of Science and Higher Education with the subvention funds of the Faculty of Computer Science, Electronics and Telecommunications of AGH University and by the PL-Grid Infrastructure.

\appendix

\section{Geometric averaging}\label{sec:geo}
In this section, we report the results of an experimental aggregator based on geometric averaging.
Since MRTG offers only 600 raw samples at a resolution of 5 min it cannot be used to evaluate geometric aggregation over 2h and 30 min as this requires 18,000 samples.

For the purpose of evaluating the link was monitored for over two months to obtain a new time series.
The traffic was aggregated into 600 samples at resolution 30 min and 600 at resolution 2h to match exactly our main experiment.
Two aggregated datasets were produced: one from the raw observation (arithmetic), and one from the log-transformed observations (geometric).
For both datasets, we repeated the main experiment reported in the paper obtaining two types of forecast: trained on 30 min data only and combined 2h and 30 min.
The Mean absolute errors obtained in the four experiments are reported in table~\ref{tab:mae}.
\begin{table}[h]
  \caption{Mean Absolute Error (Mb/s) of the forecast.}
  \label{tab:mae}
  \centering
  \begin{tabular}{ccl}
    \toprule
    Aggregation &30 min&30 min and 2 h\\
    \midrule
    Arithmetic & 395$\pm$1.8 & 389$\pm$1.8\\
    Geometric & 2749$\pm$407& \bf{376}$\pm$5\\
  \bottomrule
\end{tabular}
\end{table}
The accuracy is calculated in the original scale.
The forecasts in the geometric case are transformed forecasts from the log-transformed observations and visualized in figure~\ref{fig:tsgeo}.
\begin{figure}[h!] 
	\centering
	\includegraphics[width=\textwidth]{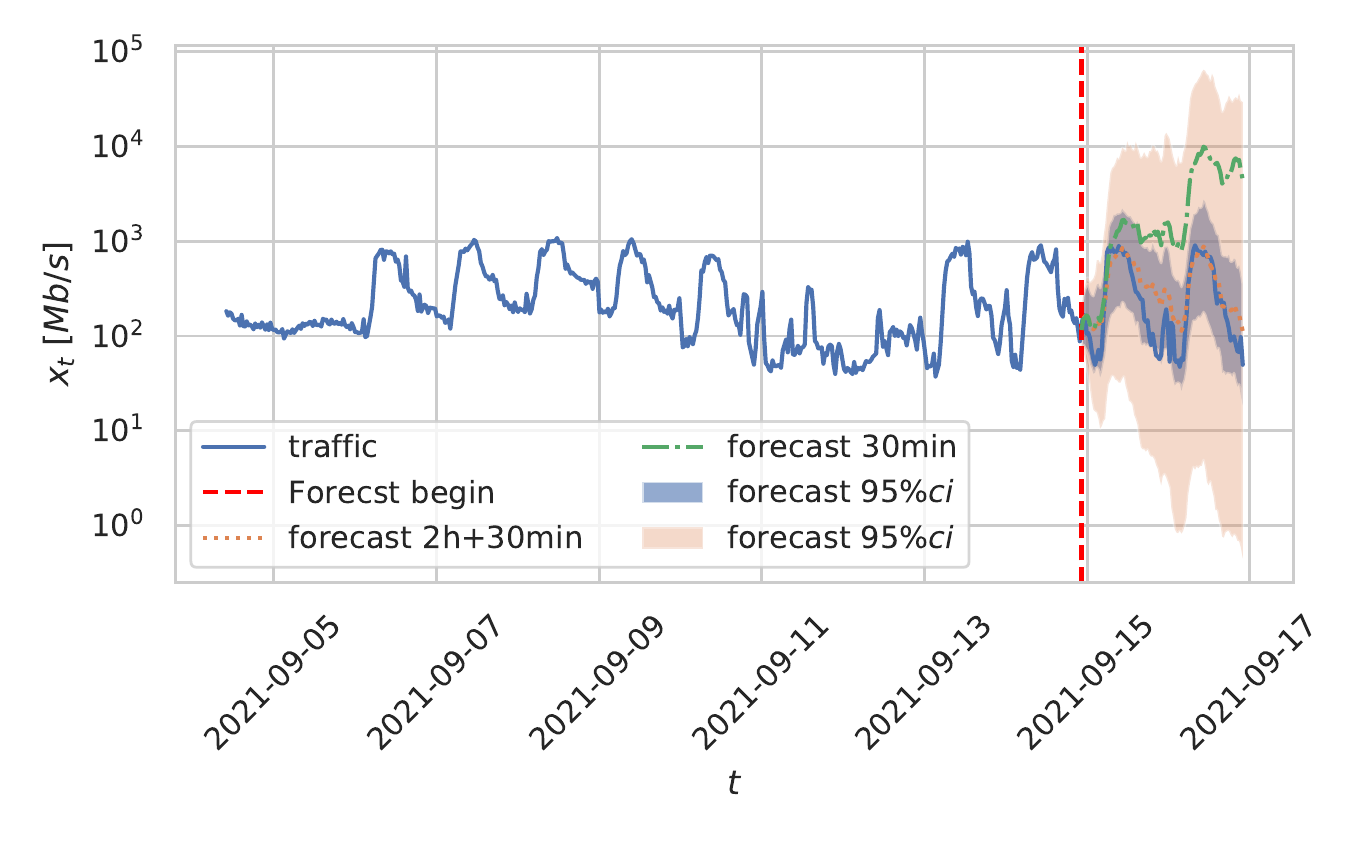}
	\caption{Fine-grained traffic time series and prediction for geometric aggregation. }
	\label{fig:tsgeo}
\end{figure}

The first observation from the experiment is that additional observations always improve prediction.
In the arithmetic experiment, the benefit is not large as in the main experiment because the traffic is quite regular and the training dataset does not contain any holidays that may introduce a large variance of the trend component.
Having said that in the geometric experiment we observe a huge improvement in the prediction.
In fact, the additional observations at 2h resolution are necessary to make a long-term forecast.
Without this, the large variance of the model gets magnified by the $\exp$ transform that results in the average (note that the mean of log-normal distribution depends on the variance of the underlying normal distribution) of over an order of magnitude too large in two days.
Using a median as a point forecast would yield more stable and interpretable long-term predictions as with normal distribution, the mean equals the median. And the median of a transformed random variable is the transformed median.
On the other hand, low variance forecasts from the model trained on the joint dataset follow the observed traffic.
Furthermore, the forecast error is reduced by ~5\% compared to the baseline of a model trained on 30 min dataset with arithmetic aggregation.

\end{document}